\documentclass[sigconf,nonacm]{acmart}

\renewcommand\footnotetextcopyrightpermission[1]{}
\makeatletter
\def\input@path{{acmart-primary/}{acmart-primary/samples/}}
\makeatother

\usepackage{listings}
\usepackage{xcolor}
\begin{document}
\title{Observability for Delegated Execution in Agentic AI Systems}

\author{Abhinav Mishra}
\affiliation{%
  \institution{Splunk, Cisco Inc}
  \city{}
  \country{}
}
\email{abhinmi3@cisco.com}

\author{Kumar Sharad}
\affiliation{%
  \institution{Splunk, Cisco Inc}
  \city{}
  \country{}
}
\email{kusharad@cisco.com}

\begin{abstract}
Delegation-scoped execution is not identifiable from standard observables: audit logs and execution traces can be identical under multiple incompatible delegation assignments. This gap is especially acute in LLM-based agentic systems, where agents dynamically select tools, vary execution sequences across runs for the same instruction, and spawn cooperating sub-agents. These dynamics fragment and interleave traces, making delegation-scoped reconstruction from causal structure alone structurally underdetermined. Although individual actions are authorized and logged, existing audit, tracing, and security schemas lack the semantics to reconstruct what actions occurred under a given delegation across heterogeneous systems. We focus on delegation-scoped attribution and access/share footprint reconstruction, not intent inference or reasoning reconstruction. We present an agent-aware observability substrate consisting of a lightweight gateway and a common information model that binds delegation context at execution time. This enables reliable cross-tool delegation-scoped reconstruction and direct forensic queries without heuristic time-window correlation.
\end{abstract}


\maketitle
\section{Introduction}

AI agents act under delegated authority, executing tasks on behalf of users across heterogeneous tools and services~\cite{yao2023react,schick2023toolformer,liu2024agentbench,wu2023autogen}. 
Unlike fixed pipelines, LLM-based agents do not execute a single predetermined plan: for the same high-level request, they may follow different tool sequences across runs (execution graph being different), backtrack after intermediate failures,  decompose tasks in ways their principals did not anticipate or specify,  and dynamically spawn sub-agents~\cite{yao2023react,wang2024survey,wu2023autogen}. These behaviors increase overlap, interleaving, and trace fragmentation across tool adapters and runtimes, making post-hoc \emph{delegation-scoped} reconstruction from traces or time windows especially undetermined. 


\begin{table*}[t]
\centering
\small
\begin{tabular}{p{7.4cm} p{3.55cm} p{3.55cm} p{1cm}}
\toprule
\textbf{Requirement from Definition~1} &
\textbf{Audit Logs / SIEM} &
\textbf{Tracing (OTel)} &
\textbf{CIM} \\
\midrule
\emph{R1: Persistent delegation context} \newline
Each event is bound to a durable \texttt{delegation\_id}. &
No \newline
(event-local actor only) &
Partial \newline
(carriable, not defined) &
Yes \\
\midrule
\emph{R2: Authority--causality separation} \newline
Authority partitions need not coincide with traces/spans. &
No \newline
(implicit only) &
No \newline
(trace defines unit) &
Yes \\
\midrule
\emph{R3: Delegation closure (lineage)} \newline
Re-delegation/agent spawning represented; lineage queryable. &
No \newline
(no lineage primitive) &
No \newline
(no closure semantics) &
Yes \\
\midrule
\emph{R4: Cross-system normalized action/resource semantics} \newline
Minimal stable verbs + resource attribution across tools. &
Partial \newline
(tool-specific) &
Partial \newline
(spans, no stable verbs) &
Yes \\
\bottomrule
\end{tabular}
\caption{Coverage of the semantic requirements implied by delegation-observability (Def. 1).}
\label{tab:req_coverage}
\end{table*}


As a result, \emph{execution is no longer prescribed but delegated}. We define \emph{delegation} as a durable authorization context issued by a principal (user/service) to an agent, whose scope persists independently of any single execution trace or workflow instance. Once delegated, the agent determines how authority is exercised across tools, time, and potentially multiple cooperating agents. As a result, traditional telemetry records individual actions correctly, yet often cannot answer basic post-hoc questions: \emph{Which actions belonged to the same delegation?}, \emph{How did authority propagate across agents?}, \emph{How did execution unfold causally across tools?}~\cite{sigelman2010dapper,fonseca2007xtrace}

This gap is most visible in multi-tool workflows: an agent may read internal documents, query issue trackers, scan collaboration channels, and post synthesized outputs to messaging~\cite{yao2023react,wang2024survey}. Although each step is individually authorized and logged, incompatible audit formats and interleaving executions make the delegation-scoped workflow difficult to reconstruct reliably; existing mechanisms provide records, not delegation-aware structure~\cite{gonzalezgranadillo2021siem,ocsf-schema,aws-security-lake-ocsf}. Enterprise deployments already rely on audit logs and behavioral analytics (e.g., SIEM/UEBA) to investigate composite outcomes in human activity \cite{cisa-insider-threats,gonzalezgranadillo2021siem}. In agentic systems, similar outcomes can arise from misuse or overly broad delegations \cite{anthropic-agentic-misalignment, ruan2024identifying, ye2024toolsword}, emergent multi-agent coordination \cite{cemri2025multiagent,guo2024large}, or \emph{LLM-mediated} deviations where untrusted content steers downstream tool use (prompt injection / indirect prompt injection)~\cite{greshake2023compromising, perez2022ignore, zhan2024injecagent}. Because these actions still occur under the user-issued delegation, the practical question becomes: \emph{what exactly did the delegation do across tools, and what was the resulting access/share footprint?}

We argue that agentic systems require \emph{delegation-aware observability}. Because delegated authority is not causally bound to any single execution path, it cannot be reconstructed reliably after the fact from traces and event-local attributes alone~\cite{sigelman2010dapper,mace2018pivot,pearl2009causality}. We present an agent-aware Common Information Model (CIM) and a lightweight gateway that intercepts agent tool invocations to bind delegation context at execution time and emit structured telemetry. CIM makes delegation membership and authority propagation first-class audit concepts, enabling reliable post-hoc reconstruction \emph{without inspecting prompts or payload content}. CIM provides delegation-scoped attribution of executed actions and their access/share footprint, even when behavior is influenced by untrusted inputs (e.g., prompt injection or accidental instruction sources). It is not an intent verifier: injected or unintended actions are still attributable to the delegation under which they executed, and questions of intent or policy compliance are orthogonal to the observability guarantees studied here.

In this paper, we make three primary contributions:
\textbf{(i)} We formalize \emph{delegation-observability}: a system property under which the set of actions performed under a delegation can be reconstructed \emph{without} heuristic time-window correlation, even under concurrency, retries, and multi-agent composition.
\textbf{(ii)} Prove a non-identifiability result: audit logs and causal tracing alone cannot, in general, recover delegation-scoped execution because delegated authority is orthogonal to causal control flow.
\textbf{(iii)} Introduce an agent-aware CIM that models delegated execution as dual structures---an authority graph and an execution graph that make delegation-scoped reconstruction well-defined across tools.

\section{Observability for Delegated Agentic Execution}~\label{sec:reqs}
Agentic systems expose a structural observability gap: a principal delegates durable authority to an autonomous agent, after which execution unfolds across tools, time, and often multiple cooperating agents~\cite{wang2024survey,wu2023autogen,park2023generative}. Existing telemetry (enterprise audit logs aggregated by SIEM systems \cite{gonzalezgranadillo2021siem}, security schemas such as OCSF \cite{ocsf-schema}, and distributed tracing such as OpenTelemetry \cite{otel-spec}) is effective at recording \emph{local} actions, but is semantically insufficient to reconstruct \emph{delegation-scoped} execution and authority propagation. 
This section formalizes semantic requirements and proves the infeasibility of reconstruction.


\subsection{Delegation Is Not Reconstructable from Execution Traces}~\label{sec:nonreconstructable}

In agentic systems, the execution graph is not a pre-specified workflow DAG: it is produced online by an LLM that selects tools, parameters, and decompositions based on intermediate observations.
As a result, the same delegation may span causally disconnected tool calls (parallel retrieval, asynchronous retries) and may fragment across multiple tool-specific traces, while multiple delegations may interleave through shared tools and shared agent runtimes \cite{yao2023react,wu2023autogen,wang2024survey}.
These properties make delegation membership particularly difficult to infer from traces or time windows alone.

Let $E$ denote the set of observed execution events. Conventional observability provides event-local attributes (actor, action, resource, outcome) and a causal structure over $E$, e.g., an execution graph $G_{\mathrm{exec}}=(E, R_{\mathrm{causal}})$ induced by trace/span parent-child links and other propagation mechanisms \cite{sigelman2010dapper,fonseca2007xtrace,w3c_trace_context,otel-traces}. Let $D$ denote the set of delegations, and let $R_{\mathrm{auth}}\subseteq D\times E$ denote the (latent) authorization relation assigning each event to the delegation under which it executed. In standard telemetry, $R_{\mathrm{auth}}$ is not directly observed.

\textbf{Authority structure.} We also model delegation and re-delegation as an \emph{authority graph} $G_{\mathrm{auth}}=(D, R_{\mathrm{lineage}})$, where $(d,d')\in R_{\mathrm{lineage}}$ iff delegation $d'$ was created under (re-delegated from) delegation $d$ (e.g., sub-agent spawning or delegated subtask creation). We use lineage to support delegation closure queries (e.g., attributing actions by a spawned sub-agent to the originating delegation context). We assume $R_{\mathrm{lineage}}$ is acyclic (a DAG), so the ancestry relation $d \preceq d'$ is well-defined via transitive closure. In this section we only require the induced lineage relation; additional node/edge typing used by the gateway is specified in Appendix~\ref{app:gateway}.

In general, $R_{\mathrm{auth}}$ is not identifiable from $(E,G_{\mathrm{exec}})$ and event-local attributes alone; Proposition~\ref{prop:nonident} formalizes this (Appendix~\ref{app:nonident}). Let $\mathrm{trace}(e)$ denote the trace identifier associated with event $e$ (when present), and let $\mathrm{tool}(e)$ denote the tool/service emitting $e$.
\paragraph{Minimal counterexample.}
Consider two delegations $d_1$ and $d_2$ issued to the same agent over overlapping intervals. Suppose the observed event stream contains interleaved actions across shared tools and two traces $T_A$ and $T_B$:
\[
\begin{aligned}
e_1&=\mathrm{Drive.read}(r_1, T_A),\quad
e_2=\mathrm{Slack.read}(r_2, T_B),\\
e_3&=\mathrm{Drive.read}(r_3, T_A),\quad
e_4=\mathrm{Confluence.read}(r_4, T_B),\\
e_5&=\mathrm{Slack.post}(r_5, T_B),\quad
e_6=\mathrm{Drive.write}(r_6, T_A).
\end{aligned}
\]
Both of the following delegation assignments are consistent with \emph{identical} observed telemetry: (i) a trace-aligned assignment
$R_{\mathrm{auth}}=\{(d_1,e_i)\ \text{iff}\ e_i\in T_A\}$ and (ii) a tool-aligned assignment
$R'_{\mathrm{auth}}=\{(d_1,e_i)\ \text{iff}\ \mathrm{tool}(e_i)=\mathrm{Drive}\}$.
In both cases the observable $(E,G_{\mathrm{exec}})$ is unchanged because standard telemetry does not include a predicate that binds events to delegation membership \cite{otel-spec,otel-baggage-concepts,otel-baggage-api}.

\begin{proposition}[Non-identifiability of delegation assignment]~\label{prop:nonident}
Let the observable telemetry be $(E, R_{\mathrm{causal}}, \mathrm{attr})$, where
$\mathrm{attr}:E\to \mathcal{A}$ are event-local attributes that do not encode delegation membership.
Then there exist two distinct authorization relations $R_{\mathrm{auth}}\neq R'_{\mathrm{auth}} \subseteq D\times E$
such that the induced observable telemetry is identical, i.e.,
$(E, R_{\mathrm{causal}}, \mathrm{attr})$ is the same under $R_{\mathrm{auth}}$ and $R'_{\mathrm{auth}}$.
Consequently, $R_{\mathrm{auth}}$ is not identifiable from $(E, G_{\mathrm{exec}})$ and event-local attributes alone.
\end{proposition}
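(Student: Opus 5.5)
The plan is to prove the proposition by an explicit construction, in the spirit of the minimal counterexample above. The central point is that the observable triple $(E, R_{\mathrm{causal}}, \mathrm{attr})$ is produced by the telemetry pipeline without any input from $R_{\mathrm{auth}}$. I would first make this precise. I model a system run as a tuple $(E, R_{\mathrm{causal}}, \mathrm{attr}, R_{\mathrm{auth}})$ and the observation map as the projection $\pi(E, R_{\mathrm{causal}}, \mathrm{attr}, R_{\mathrm{auth}}) = (E, R_{\mathrm{causal}}, \mathrm{attr})$. The hypothesis that $\mathrm{attr}$ ``does not encode delegation membership'' is read as: $\mathrm{attr}$ is fixed independently of $R_{\mathrm{auth}}$, so that replacing $R_{\mathrm{auth}}$ by any other admissible relation leaves $\mathrm{attr}$ unchanged. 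No function of $\mathrm{attr}(e)$ determines the delegation of $e$.

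Given this, the construction proceeds as follows. Assume non-degeneracy: $|D|\ge 2$, with $d_1\neq d_2\in D$, and $E\neq\emptyset$. Fix any admissible $R_{\mathrm{auth}}$, for instance one assigning every event to exactly one delegation. Pick an event $e^\ast\in E$ with $(d_1,e^\ast)\in R_{\mathrm{auth}}$. Define $R'_{\mathrm{auth}}$ from $R_{\mathrm{auth}}$ by swapping $(d_1,e^\ast)$ for $(d_2,e^\ast)$ and leaving all other pairs unchanged. Then $R'_{\mathrm{auth}}\neq R_{\mathrm{auth}}$, and it is still a valid assignment, with each event bound to exactly one delegation. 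Since $\pi$ ignores the fourth coordinate, $\pi$ of the two runs coincide, which is the first claim. Any reconstruction procedure $f$ acting on observables must satisfy $f(\pi(\cdot))$ equal on both runs. So $f$ errs on at least one of them, and $R_{\mathrm{auth}}$ is not identifiable from $(E, G_{\mathrm{exec}})$ and event-local attributes. To make the example non-vacuous, I would instantiate it with the six-event stream above: trace-aligned versus tool-aligned assignment differ on $e_2$, $e_4$ and $e_5$ relative to $d_1$, while traces, causal links and attributes stay identical.

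The main obstacle is not the swap argument but the \emph{realizability} of $R'_{\mathrm{auth}}$. A sceptic could argue that some assignments are physically impossible. For example, a causal edge $(e,e')\in R_{\mathrm{causal}}$ inside a single agent turn might force $e$ and $e'$ to share a delegation, and then arbitrary swaps would break consistency. I would first try to handle this by choosing $e^\ast$ to be a root of its trace, and swapping the delegation of its whole causal component (a connected component of $G_{\mathrm{exec}}$) at once. That preserves any ``causally linked events share a delegation'' constraint. The remaining requirement is that $G_{\mathrm{exec}}$ has at least two components, or that the agent runtime is shared by $d_1,d_2$ so both assignments are runnable. I would state this as the explicit side condition, satisfied in the counterexample by traces $T_A$ and $T_B$ under a shared agent. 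Finally, I would note that if $\mathrm{attr}$ could depend on $R_{\mathrm{auth}}$, the argument fails, which is exactly what the CIM's \texttt{delegation\_id} exploits.
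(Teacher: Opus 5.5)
Your core argument is the paper's: both model observation as a projection that discards $R_{\mathrm{auth}}$ and exhibit two distinct assignments with the same image. Your single-event swap in place of the paper's two bipartitions via subsets $S\neq S'$ is a cosmetic difference. The causal-component refinement goes beyond the paper, answering a realizability objection its proof does not address.

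One concrete correction: the planned instantiation fails as stated. In the six-event stream, $T_A=\{e_1,e_3,e_6\}$ is exactly the set of Drive events, so the trace-aligned and tool-aligned assignments coincide. Neither places $e_2,e_4,e_5$ under $d_1$, so this pair does not witness non-identifiability. The defect is inherited from the minimal counterexample in the text, which has the same problem. Instead, use your own construction, e.g.\ exchange $T_A$ and $T_B$ between $d_1$ and $d_2$. That changes every pair, keeps each causal component within one delegation, and leaves both delegations nonempty.
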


\noindent \textbf{Proof sketch.} Fix any telemetry $(E, R_{\mathrm{causal}}, \mathrm{attr})$ that does not encode delegation membership. Choose $D$ with at least two delegations and define two different, valid assignments of events in $E$ to delegations (e.g., trace-aligned vs tool-aligned as above). Because the observables do not depend on $R_{\mathrm{auth}}$, both assignments induce the same telemetry. A complete proof appears in Appendix~\ref{app:nonident}.

Retries/backtracking, and multi-agent fanout becomes common, traces fragment across runtimes and events interleave across shared tools. Consequently, any reconstruction based solely on time windows, trace boundaries, or inferred workflow tokens is inherently heuristic. Delegation must therefore be captured at execution time.

\noindent \textbf{Pragmatic baseline: session\_id/task\_id augmentation.} A natural baseline is to attach a session\_id or task\_id to events and treat it as the reconstruction unit. This baseline fails in delegated systems for three reasons: (i) \emph{fragmentation}---multi-agent composition emits disjoint sessions across runtimes \cite{wu2023autogen}; (ii) \emph{multiplicity}---retries and backtracking yield multiple sessions for a single delegation; and (iii) \emph{lack of closure}---re-delegation induces an authority lineage that flat session identifiers cannot represent. The result is heuristic correlation rather than delegation-scoped reconstruction.



\subsection{Authority and Causal Flows are Orthogonal}~\label{sec:orthogonal}
Delegation introduces an authority structure that is independent of execution causality. Distributed tracing encodes causal relationships: a request induces downstream calls; spans form a directed acyclic graph representing a causally connected computation; and context propagation follows causal edges~\cite{sigelman2010dapper,fonseca2007xtrace,otel-traces,w3c_trace_context}. Delegated authority does not obey this propagation model. A single delegation can authorize causally disconnected actions (parallel tool calls, asynchronous retries, multi-agent decomposition), while causally connected actions may occur under distinct delegations (re-delegation, scoped sub-authorizations)~\cite{sandhu1996rbac}.

\noindent \textbf{Litmus test:} \emph{shared authority $\neq$ shared causality}
Any telemetry model that conflates ``same trace'' with ``same unit of accountability'' will fail under delegated execution. Tracing encodes $R_{\mathrm{causal}}$, whereas delegation-aware observability requires representing $R_{\mathrm{auth}}$ as an independent structure with lineage closure.


\subsection{Derived Semantic Requirements}~\label{sec:derived}

\noindent \textit{Definition 1: Delegation-Observable System.}
A system is \emph{delegation-observable} if, for every delegation $d\in D$, the set of events executed under $d$ (including any events executed under re-delegations in the lineage of $d$) can be reconstructed from telemetry \emph{without heuristic time-window correlation}, even under concurrency, retries, and multi-agent composition. Achieving delegation-observability requires the following semantic commitments (Table~\ref{tab:req_coverage} for comparison with Audit logs and Trace).

\noindent \textbf{R1: Persistent delegation context.}
Events must be bound to delegation identifiers. Otherwise, delegation-scoped blast radius and accountability queries collapse to correlation heuristics.

\noindent \textbf{R2: Authority--causality separation.}
Telemetry must represent authority structure independently of causal structure (delegation partitions need not coincide with traces/spans). Without this, authority--causality divergence is not definable.

\noindent \textbf{R3: Delegation closure over authority propagation.}
Telemetry must represent re-delegation and agent spawning and support lineage closure so a delegation and its descendants can be queried as a unit. Without closure, multi-agent compositions remain fragmented and reconstruction is incomplete.

\noindent \textbf{R4: Cross-system normalization for delegation-scoped queries.}
Telemetry must normalize an action vocabulary and resource attribution sufficient for cross-tool querying. Without normalization, analysis remains siloed per tool and delegation-scoped investigations fail in practice \cite{ocsf-schema,aws-security-lake-ocsf,gonzalezgranadillo2021siem}.

\noindent\textbf{Intent vs. Attribution.} CIM supports delegation-scoped attribution of executed actions and their access/share footprint; it does not infer whether a deviation reflects intent, error, or manipulation.
A ``sleeper'' agent \cite{sleeper-agents} may behave benignly across many runs but later access sensitive resources; CIM makes the resulting footprint auditable while remaining agnostic to intent. 

\subsection{Not just \texttt{delegation\_id}}
CIM is \emph{not} the introduction of a single identifier; it is a set of \emph{binding, closure, and consistency invariants} that make delegation-scoped reconstruction well-defined across heterogeneous tools . Moreover, delegation membership is \emph{non-identifiable} from execution telemetry alone: there exist distinct authorization relations $R_{\mathrm{auth}}\neq R'_{\mathrm{auth}}$ that induce the same observed events and causal structure $(E, R_{\mathrm{causal}})$ (\S\ref{sec:nonreconstructable}).
Storing a \texttt{delegation\_id} in OpenTelemetry attributes/baggage~\cite{otel-baggage-concepts,otel-baggage-api,w3c_baggage} is therefore an \emph{implementation technique}, but it does not by itself define a delegation-aware observability model:
(i) it provides transport but no \emph{authority closure} over re-delegations/agent spawning;
(ii) it lacks \emph{lineage constraints} specifying who may mint identifiers and how they bind to authenticated principals;
(iii) it imposes no \emph{cross-tool consistency checks} on action/resource semantics, so shared IDs do not imply comparable event meaning ; and
(iv) it provides no \emph{semantics of absence}, making "not delegated" indistinguishable from "telemetry missing".
By making these invariants explicit, CIM turns heuristic correlation into a \emph{well-posed reconstruction problem} with clear correctness conditions .

These requirements motivate an agent-aware Common Information Model that explicitly encodes both execution causality and delegated authority. We introduce the model in \S4 and analyze its structural consequences in \S5. Table 3 enumerates assumptions.

\section{Observability Models and Related Work}~\label{sec:rel}

\noindent \textbf{LLM/agent security threats and model auditing (complementary).}
Work on LLM security catalogs threats such as prompt injection (including indirect injection), insecure tool/output handling, data disclosure, supply-chain risks, and privilege/identity abuse \cite{owasp-llm-top10,owasp-prompt-injection}. Other work studies agentic misalignment and deceptive behaviors that may persist through training, as well as auditing models for hidden objectives \cite{anthropic-agentic-misalignment,alignment-survey}. These efforts primarily analyze threats, mitigations, or model properties; CIM is complementary infrastructure that makes delegation-scoped cross-tool actions and access/share footprint auditable, without intent inference.

\noindent \textbf{LLM/agent observability platforms (LangSmith, Langfuse, and related tooling).}
LLM observability platforms provide developer-facing traces of LLM calls, tool invocations, and agent steps for debugging and monitoring, often with payload/prompt inspection and evaluation hooks \cite{langsmith-obs,langfuse-obs,langfuse-github}. They are typically scoped to a single application stack and do not aim to define cross-system enterprise observability semantics \cite{gonzalezgranadillo2021siem}.

\noindent \textbf{Enterprise audit logs and SIEM.}
Audit logs and SIEM systems aggregate event-local facts across tools \cite{gonzalezgranadillo2021siem}, but remain event-centric and do not natively represent durable delegated authority or authority propagation. Correlation via time windows or inferred workflow tokens breaks under overlap, retries, and multi-agent composition \cite{fellegi1969record_linkage,christen2012data_matching}; security schemas such as OCSF standardize fields but still treat delegation as optional annotation \cite{ocsf-schema,aws-security-lake-ocsf}.

\noindent \textbf{Distributed tracing (e.g., OpenTelemetry).}
Tracing encodes causal context via trace/span identifiers and parent-child links, yielding $G_{\mathrm{exec}}$ \cite{sigelman2010dapper,fonseca2007xtrace,otel-traces,otel-spec,w3c_trace_context}. While OpenTelemetry can transport arbitrary metadata \cite{otel-baggage-concepts,otel-baggage-api,w3c_baggage}, transport alone does not define delegation semantics (minting/binding, lineage closure, or reconstruction guarantees) required for delegation-scoped observability.

\noindent \textbf{Provenance and lineage (W3C PROV and related models).}
PROV formalizes \emph{entities}, \emph{activities}, and \emph{agents} and supports rich lineage across heterogeneous systems \cite{prov-dm,prov-overview,moreau2013provenance}; whole-system provenance further demonstrates the value of high-fidelity capture for investigations \cite{pohly2012hifi,pasquier2017camflow,bates2015lpm}. However, PROV is not specialized for delegation-scoped observability in agentic execution, where a single user-issued authority context may span asynchronous, disconnected, and dynamically expanding tool interactions \cite{wang2024survey,wu2023autogen}. CIM can be viewed as a PROV profile by mapping tool events to PROV activities and representing delegations (and re-delegations) as typed objects and relations; our contribution is the delegation-observability contract (binding/minting, lineage-closure semantics, and semantics of missing coverage) needed for reliable delegation-scoped reconstruction, which PROV does not standardize.

\noindent \textbf{Access control (capabilities, ABAC, XACML).}
Access-control frameworks study how authority is granted and enforced \cite{hardy1988confused_deputy,xacml-core}. Our focus is orthogonal: even with correct enforcement, organizations still need delegation-scoped, cross-tool accountability of what actions executed under delegated authority \cite{schneier1999secure_audit_logs}.
\section{Common Information Model (CIM)}~\label{sec:cim}

Delegated execution  is inherently dual-structured: it induces (i) a control-flow  describing how actions occur, and (ii) an authority  describing under which delegated authorization they occur. CIM preserves both structures explicitly and independently.


\subsection{Delegated Execution as Dual Structures}~\label{sec:dual}
\noindent \textbf{Execution structure.} Let $E$ denote the set of execution events emitted by an agentic system.
The causal organization of execution is represented as an execution graph
$G_{\mathrm{exec}}=(E, R_{\mathrm{causal}})$,
where $R_{\mathrm{causal}}$ encodes causal links among events (e.g., trace/span parent--child relationships). 

\noindent \textbf{Authority structure.} Delegation induces a distinct structure that is not implied by causality. Let $D$ denote the set of delegations. Each delegation is a durable authorization context issued by a principal to an agent,  spanning asynchronous actions and multiple execution subgraphs.
Delegation induces (i) an authority graph $G_{\mathrm{auth}}$ over principals and agents capturing delegation and re-delegation relationships, and (ii) an authorization relation $R_{\mathrm{auth}} \subseteq D \times E$ that assigns events to the delegations under which they execute.


\subsection{Core Semantic Components}~\label{sec:components}
CIM introduces a minimal set of semantic components designed to preserve both execution and authority structure:

\noindent \textbf{Delegation identifiers}, which bind each event to a durable authorization context, thereby preserving the delegation-induced partition of events . \\
\noindent \textbf{Agent identity}, which distinguishes autonomous execution entities (agents and agent instances) from human principals and background services.\\
\noindent \textbf{Workflow correlation identifiers} (e.g., trace and span identifiers), which preserve the causal execution graph $G_{\mathrm{exec}}$. \\
\noindent \textbf{Normalized semantic actions}, which abstract heterogeneous tool operations into a shared vocabulary sufficient for cross-system analysis without payload inspection.

Appendix~B specifies a minimal CIM event schema and a tool-to-action mapping. Here we state the semantic contract required for delegation-observability (Def. 1) and for the requirements in \S2.3 to be operational.
CIM normalizes actions in two tiers. Tier-1 defines a required minimal verb set: \emph{read}, \emph{write}, \emph{share}, \emph{invoke}. Tier-1 is intentionally coarse; deployments may extend it with additional verbs (e.g., \emph{delete}, \emph{permission-change}, \emph{admin/config}, \emph{execute}, \emph{download}/\emph{upload}) while preserving cross-tool comparability by mapping tool-specific operations to a shared action vocabulary \cite{ocsf-schema}.
Tier-2 provides optional qualifiers that refine the action semantics without referencing content: (i) \emph{direction} (internal vs.\ external), (ii) \emph{resource\_class} (e.g., document, message, repository, ticket, credential), (iii) \emph{sensitivity\_label} (e.g., public/internal/confidential/restricted or organization-defined labels), and (iv) \emph{principal\_scope} (e.g., self/team/org/external-partner).
Tier-1 enables stable cross-tool aggregation; Tier-2 supports governance and security queries~\cite{gonzalezgranadillo2021siem,milajerdi2019holmes}.


\subsubsection{Canonical delegation-scoped query patterns.}
CIM is designed so common post-hoc questions can be expressed as direct delegation-scoped queries rather than heuristic joins \cite{christen2012data_matching}.
Examples include:
(i) \emph{delegation-scoped resource touch set} (blast radius): return distinct $\{\texttt{resource\_id}, \texttt{resource\_class}, \texttt{sensitivity\_label}\}$ for events with $\texttt{delegation\_id}=d$ and $\texttt{action}\in\{\texttt{read},\texttt{write},\texttt{share},\texttt{invoke}\}$;
(ii) \emph{lineage aggregation} (closure over propagation): aggregate counts and distinct agents over all delegations in the lineage of $d$ (including descendants);
and (iii) \emph{authority--causality divergence filters}: identify traces containing multiple delegation identifiers and delegations spanning multiple traces, both of which arise naturally under asynchronous and multi-agent execution. 


\begin{table*}[t]
\centering
\small
\begin{tabular}{p{2.5cm} p{7.2cm} p{7.2cm}}
\toprule
\textbf{Query} & \textbf{Baseline (audit/tracing) effort} & \textbf{CIM effort} \\
\midrule
Blast radius (Q1) &
Per-tool queries + heuristic joins on (user, time window, inferred workflow token); manual reconciliation under overlap &
Single delegation-scoped filter; group by \texttt{resource\_class}/\texttt{sensitivity\_label} \\
\midrule
Agent attribution (Q2) &
Often unavailable (no stable agent identity across tools); requires manual attribution from tool-specific fields &
Filter by \texttt{delegation\_id}; group by \texttt{agent\_id}/\texttt{agent\_instance\_id} \\
\midrule
Externalization provenance (Q3) &
Per-tool ``share/export/email'' searches + manual correlation to upstream actions; traces fragment across systems &
Filter \texttt{delegation\_id} with \texttt{action=share} and \texttt{direction=external}; optionally join to causal ancestry \\
\bottomrule
\end{tabular}
\caption{Delegation-scoped query effort: baseline observability relies on heuristic cross-tool correlation, whereas CIM enables direct delegation-scoped predicates \cite{gonzalezgranadillo2021siem,otel-spec,fellegi1969record_linkage}.}
\label{tab:query_effort}
\end{table*}


\subsection{Structural Invariants}~\label{sec:invariants}

CIM specifies a small set of semantic invariants required for delegation-observability (Def.~1). If these invariants do not hold, delegation-scoped reconstruction becomes ambiguous or ill-defined.

\noindent \textbf{I1: Delegation consistency.}
Events sharing a \texttt{delegation\_id} must correspond to a stable authority context.
\emph{Failure mode:} if \texttt{delegation\_id} can be reused or repurposed, delegation-scoped queries mix scopes, corrupting blast-radius and policy analyses.

\noindent \textbf{I2: Authority--causality separation.}
Authority partitions may cut across traces, and traces may contain multiple delegations.
\emph{Failure mode:} if \texttt{delegation\_id} is forced to coincide with \texttt{trace\_id}, asynchronous fanout and re-delegation are misrepresented and divergence predicates become undefined.

\noindent \textbf{I3: Lineage closure under authority propagation.}
Re-delegation and agent spawning must preserve attribution through lineage (via inheritance or explicit linkage).
\emph{Failure mode:} without lineage edges, multi-agent workflows fragment and ``all actions under delegation $d$ (including descendants)'' is not a well-defined query.

\noindent \textbf{I4: Cross-system semantic normalization.}
A stable action/resource vocabulary must be preserved across tools.
\emph{Failure mode:} without normalization, delegation-scoped investigations devolve into per-tool logic and queries are not portable across environments.

\noindent \textbf{I5 (Evidentiary settings): Binding integrity.}
\texttt{delegation\_id} should be bound to authenticated principals and minting context (e.g., gateway/session) rather than accepted as unauthenticated free-form strings.
\emph{Failure mode:} compromised components can mis-attribute actions to benign delegations, undermining accountability.


\subsection{Lean Core and Extension Points}~\label{sec:core}
CIM is designed as a lean core: each component preserves a distinct aspect of delegation-observability, while allowing deployments to attach richer context (e.g., queries, payload-derived labels, or tool-specific attributes) as optional enrichment~\cite{otel-spec,prov-dm}.

\noindent \textbf{Why each core component is necessary.} Each field class supports at least one requirement from \S2.4: without delegation identifiers, the delegation-induced partition $R_{\mathrm{auth}}$ is not reconstructable; without agent identity, multi-agent authority propagation is not attributable; without causal correlation identifiers, execution structure and temporal evolution cannot be reconstructed~; without normalized semantic actions and resource attribution, cross-system queries required in enterprise settings cannot be expressed. Appendix B contains complete CIM definitions and examples.

\noindent \textbf{Baselines and failure modes.}
Two simpler alternatives are natural competitors: (B1) session\_id/task\_id-only correlation and (B2) delegation\_id carried as free-form metadata. B1 fails delegation closure and cross-graph span under multi-agent fanout and retries, yielding fragmented sessions for a single delegation and no lineage semantics. B2 may transport a string named \texttt{delegation\_id}, but it does not satisfy requirements R1--R4 (\S2.4) and does not make delegation-scoped reconstruction well-defined.

CIM is an observability substrate, not an intent-inference or prevention mechanism: it enables post-hoc reconstruction of actions under delegated authority. To retain evidentiary value, delegation identifiers must be bound to authenticated principals and minting context; in our architecture this binding occurs at the gateway boundary, and lineage/bindings are recorded in gateway-emitted telemetry rather than accepted solely as agent-asserted metadata. CIM is designed to be framework-agnostic. 
The next section instantiates these semantics in a concrete event schema (e.g.,Table 2) that are not structurally representable with audit logs or tracing .


\begin{table*}[t]
\centering
\small
\begin{tabular}{p{3.1cm} p{6.8cm} p{6.8cm}}
\toprule
\textbf{Scenario / Question} &
\textbf{Structural signal from CIM} &
\textbf{Assumptions / Limits} \\
\midrule
Cross-tool delegation reconstruction &
Delegation-scoped partition over events across tools and time (\texttt{delegation\_id}; cross-system normalization) &
Actions traverse the instrumented boundary (gateway/SDK); uninstrumented channels are out of scope \\
\midrule
Multi-agent propagation accountability &
Delegation lineage (\texttt{delegation\_parent\_id}) and closure enable attribution across sub-agents with fragmented traces &
Requires lineage emission at delegation boundaries; child agents must propagate or be bound by gateway \\
\midrule
Authority--causality divergence analysis &
Checkable divergence predicates: $|\mathrm{Comp}(d)|>1$ or multiple delegation IDs per causal component &
Requires both delegation IDs and causal correlation fields; incomplete tracing reduces causal ancestry tests \\
\midrule
Execution drift under fixed authorization &
Envelope expansion over (\texttt{tool\_class}, \texttt{action}, \texttt{resource\_class}, \texttt{sensitivity}) without authority expansion event &
Requires a notion of authority expansion/approval event or a baseline envelope; payload semantics remain out of scope \\
\midrule
Delegation misbinding / spoofing (within channel) &
Detect inconsistent bindings between delegation, principal, and minting context; audit of issuance/approval edges &
Assumes delegation IDs are gateway-minted/validated and bound to authenticated principals; tamper-proof logging is complementary \\
\midrule
Gateway bypass / out-of-band actions &
No direct signal (outside CIM coverage); can only detect absence relative to expected tool inventory &
Enforcement/mandatory routing is \textbf{out of scope}; must be provided by deployment controls \\
\bottomrule
\end{tabular}
\caption{CIM as \emph{structural signal}: what delegation-aware telemetry makes observable, and the assumptions required for those signals to be valid. This framing separates observability from enforcement and prevention \cite{garlan1994architecture,schneier1999secure_audit_logs}.}
\label{tab:signal_assumption}
\end{table*}


\section{Structural Consequences of CIM}~\label{sec:consequences}

This section derives analytic consequences of delegation-aware telemetry. Since $R_{\mathrm{auth}}$ is non-identifiable from traces and event-local attributes alone (\S2.1), making delegation explicit yields qualitatively new, delegation-scoped analyses that are ill-posed under audit-only or trace-only telemetry.



\subsection{Cross-System Delegation Partitions}~\label{sec:partitions}
Delegation-aware identifiers induce a stable partition of heterogeneous events across tools. Formally, CIM defines an equivalence relation on events by shared \texttt{delegation\_id}; the resulting classes form a cross-system delegation partition of $E$, remaining well-defined even when traces fragment.

\noindent \textbf{Worked example (baseline ambiguity vs.\ CIM determinism).}
Consider the following five events across two tools (Docs and Slack), observed within a narrow time window and produced by the same user and agent instance:
\begin{enumerate}
\item[$e_1$] Docs.read(D1), trace $T_1$
\item[$e_2$] Slack.search(\#sales), trace $T_2$
\item[$e_3$] Docs.read(D2), trace $T_3$
\item[$e_4$] Slack.post(\#sales, ``draft summary''), trace $T_4$
\item[$e_5$] Docs.write(Report), trace $T_5$
\end{enumerate}
Under baseline correlation, multiple delegation partitions are consistent with the observables: grouping by time window yields one workflow; grouping by tool yields \{Docs actions\} vs.\ \{Slack actions\}; grouping by trace yields five disjoint ``tasks.'' Each grouping is plausible without additional assumptions. CIM creates unique partition; all events with the same \texttt{delegation\_id} belong to the same delegated authority context regardless of trace fragmentation.

\noindent \textbf{Delegation-scoped blast radius.}
Since the delegation partition is defined across tools and time, investigators can compute a delegation's blast radius as a first-class object:
\emph{Which resources (by \texttt{resource\_class} and \texttt{sensitivity\_label}) were touched under delegation $d$, across all systems, including asynchronous follow-ups?}
This question is ill-posed under baseline logs unless one assumes trace connectivity or uses fragile time-window joins~\cite{gonzalezgranadillo2021siem}. Even then it is  expensive to perform a series of join/ aggregation (Table 2).


\subsection{Authority--Causality Divergence}~\label{sec:divergence}

Authority boundaries and causal structure are independent: actions may share authority without causal linkage, and causally related actions may occur under distinct delegations. CIM makes this orthogonality explicit and testable.

\noindent \textbf{Checkable divergence conditions.}
Let $C(e)$ denote the causal connected component containing event $e$ in $G_{\mathrm{exec}}$ (e.g., a trace-connected component). For a delegation $d$, define
$\mathrm{Comp}(d)=\{\,C(e)\mid (d,e)\in R_{\mathrm{auth}}\,\}$.
We say $d$ exhibits authority--causality divergence if $|\mathrm{Comp}(d)|>1$, i.e., a single delegation spans multiple causal components. Dually, a causal component $C$ exhibits divergence if it contains events with multiple delegation identifiers:
$\exists e_1,e_2\in C$ such that $\mathrm{del}(e_1)\neq \mathrm{del}(e_2)$.
These conditions are structural and can be evaluated directly from CIM telemetry.

If the system emits a delegation issuance/approval event $\mathsf{issue}(d)$, CIM supports predicates of the form:
\emph{events with \texttt{delegation\_id}=$d$ that are not causally descended from $\mathsf{issue}(d)$}.
Operationally, this corresponds to filtering events where $\texttt{delegation\_id}=d$ and either (i) $\texttt{trace\_id}\neq \texttt{trace\_id}(\mathsf{issue}(d))$ or (ii) the event lacks an ancestor relationship to the issuance span. Such authority-scoped queries are not expressible in trace-only telemetry because tracing lacks a durable authority identifier independent of causal structure.


\subsection{Multi-Agent Authority Propagation}~\label{sec:propagation}

 Agentic systems often fragment causal context propagation across frameworks, queues, and vendor-managed backends. CIM encodes delegation through lineage semantics~\cite{prov-dm}. We model delegation propagation as a directed forest over delegations in which each delegation $d$ may reference a parent $\mathrm{parent}(d)$ (for re-delegation or agent spawning). The transitive closure yields $\mathrm{Ancestors}(d)$ and $\mathrm{Descendants}(d)$, enabling lineage-scoped queries that are independent of trace connectivity, e.g., a primary delegation $d_0$ may trigger a child delegation $d_1$ executed by a specialized sub-agent in a separate runtime, producing a disjoint execution trace. Delegation closure ensures events under $d_1$ remain attributable to the originating authority context via lineage, even when $G_{\mathrm{exec}}$ fragments.


\subsection{Execution Drift Under Fixed Authorization}~\label{sec:drift}

Delegated workflows evolve over time. Under a fixed delegated authority context, the set of tool classes, action types, and resource classes accessed by an agent may expand. CIM enables drift characterization using metadata only (no prompt/payload inspection) by comparing the observed envelope of actions/resources to authority events recorded in the delegation lineage.

\noindent \textbf{Metadata-only drift definition.}
For a delegation $d$, define an envelope
$
\mathrm{Env}(d)=\{\,(\texttt{tool\_class}, \texttt{action\_tier1}$ \\$, \texttt{resource\_class}, \texttt{sensitivity\_label}) \mid (d',e)\in R_{\mathrm{auth}},\, d'\in \{d\}\cup \mathrm{Descendants}(d)\,\}.
$
We say drift occurs if $\mathrm{Env}(d)$ expands along a monitored dimension (e.g., new \texttt{resource\_class}, first occurrence of \texttt{write/share}, or increased \texttt{sensitivity\_label}) without an accompanying authority expansion event in the lineage (e.g., a new delegation issuance/approval that broadens scope). Similar execution drifts can be applied to users or agents using CIM .

\noindent \textbf{Execution Drift Example.} A user delegates: \emph{``Summarize our Q4 sales performance.''}
During autonomous execution, the agent's realized action scope evolves:
(1) \texttt{read} \texttt{sales\_data.xlsx} (in-scope);
(2) \texttt{search} CRM for customer segments (plausible);
(3) \texttt{read} employee compensation table (\textbf{scope expansion});
(4) \texttt{share} detailed breakdown to a broad Slack channel (\textbf{dissemination}).
With CIM, all actions remain grouped under a single \texttt{delegation\_id}, making drift visible as delegation-scoped temporal evolution. Without CIM, these actions appear as unrelated tool events and require heuristic correlation to attribute them to the same delegated objective.

\noindent \textbf{Detector sketch.}
A simple drift detector computes a baseline envelope $\mathrm{Env}_0$ from early-phase events, then flags a delegation if (i) $\mathrm{Env}(d)\setminus \mathrm{Env}_0$ contains a high-impact tuple (e.g., \texttt{share} to \texttt{external}, \texttt{write} to sensitive \texttt{resource\_class}), and (ii) the lineage contains no post-hoc authority expansion/approval event after the envelope change. This yields a structural ``no authority expansion'' check using only CIM fields and lineage relations \cite{prov-dm,moreau2013provenance}.


\subsection{Post-Hoc Structural Accountability}~\label{sec:accountability}

When delegation partitions, causal relationships, and agent identity are preserved, a delegated workflow can be reconstructed as a structured object rather than a flat event stream . This enables governance and security questions to be answered directly from telemetry without time-window correlations.

\noindent \textbf{Accountability queries.}
Examples of post-hoc questions enabled by CIM include:
(Q1) \emph{What is the complete resource touch set under delegation $d$ (including descendants), grouped by \texttt{resource\_class} and \texttt{sensitivity\_label}?}
(Q2) \emph{Which agent instances acted under delegation $d$, and which actions were performed by sub-agents versus the primary agent?}
(Q3) \emph{Did delegation $d$ result in any externalization events (e.g., \texttt{share} with \texttt{direction=external}), and which causal subgraphs produced those outcomes?}
These queries align with common SIEM/EDR investigative workflows but become well-posed when delegation and lineage are explicit \cite{gonzalezgranadillo2021siem,hassan2020tactical,hossain2017attack_reconstruction}.

\subsection{Attribution vs.\ Intent under Untrusted Inputs}
\label{sec:attrib-vs-intent}

CIM provides delegation-scoped attribution of executed actions and their access/share footprint. This attribution holds regardless of whether actions were intended by the delegating principal.

\noindent\textbf{Prompt injection and indirect manipulation.}
Untrusted content in the environment (e.g., a malicious document, crafted tool response, or injected instructions in retrieved data) can steer an agent to issue unintended tool calls \cite{greshake2023compromising,perez2022ignore}. Under CIM, such calls are attributed to the active delegation because they executed under its authority. CIM makes the resulting footprint auditable (resources touched, sensitivity labels, and dissemination targets). Determining whether a particular action was injection-induced or benign is  outside CIM's scope.

\noindent\textbf{Footprint change as a structural signal.}
The drift characterization in \S\ref{sec:drift} summarizes how a delegation's observed envelope changes over time (e.g., new resource classes, first write/share, higher sensitivity). Such changes may be caused by benign autonomy, mis-specification, or adversarial influence; CIM does not attribute cause. The role of the drift signal is to surface salient footprint expansions for review, not to detect injection.


\section{Capture Boundary: Gateway for CIM}

The semantic requirements in \S2 (R1--R4) and the reconstruction properties in \S4 assume that delegation context is bound at capture time rather than inferred post hoc from event-local audit logs or causal traces \cite{gonzalezgranadillo2021siem,otel-spec,sigelman2010dapper}. In delegated execution, the authority relation $R_{\mathrm{auth}}$ in many cases is  not identifiable from $(E, G_{\mathrm{exec}})$. A gateway provides a practical capture boundary that observes tool invocations and emits minimal, metadata-only CIM events with explicit delegation structure. Table 3 later highlights various scenarios with CIM along with any limitations.

\noindent\textbf{Gateway contract}. The gateway is a mediation layer (library, sidecar, proxy, or MCP middleware) that sits on the invocation path between an agent runtime and external tools/services. Its job is not to interpret prompts or tool payloads, but to ensure that each emitted CIM event carries (i) a stable delegation identifier, (ii) a principal binding, and (iii) normalized action semantics sufficient to support the queries in \S4. Concretely, the gateway enforces the following contract:
(a) minting and binding---each delegation instance is associated with an authenticated principal (human or service identity) and an issuance time;
(b) propagation---delegation context is attached to all downstream tool calls made under that delegation, including across multi-agent delegation chains;
(c) normalization---tool-specific operations are mapped to a small set of canonical verbs (e.g., read/write/share/invoke) and stable target descriptors;
and (d) explicit absence---if delegation context is missing at the boundary, the event is emitted with an explicit ``unbound/unknown''  state.

\noindent\textbf{Provenance and trust boundary.} The gateway distinguishes gateway-bound metadata (e.g., authenticated principal, $delegation\_id$ minting binding, tool endpoint identity, timestamps) from agent-provided context (e.g., optional labels, high-level task tags) . This separation supports integrity reasoning: the main paper's guarantees hold for fields derived at the capture boundary, while optional agent-provided annotations are treated as untrusted hints . We do not require payload inspection; organizations may choose whether to store or analyze content separately from CIM.

\noindent\textbf{Scope}. The gateway is an implementation mechanism that realizes the observability semantics defined by CIM; it is not required to be deployed as a single centralized service, and it may be realized as a per-environment adapter. Full design details are provided in Appendix C. Appendix D further analyzes three classes of insider evasion (out-of-band channels, delegation fragmentation, and identifier spoofing). 

\section{Experiments}
\subsection{Evaluation Environments}\label{sec:prod_params}

We evaluate CIM along three axes: (i)~\emph{delegation reconstruction}---whether
delegation-scoped execution and authority propagation can be recovered without
heuristic correlation under overlap, retries, and multi-agent
composition; (ii)~\emph{operational overhead}---the incremental cost of emitting
CIM events at a gateway boundary and querying them in downstream
observability/security pipelines; and (iii)~\emph{query construction
burden}---the analyst effort required to express delegation-scoped questions
with and without CIM, measured as the number of correlation, normalization,
and join operations per query class (Table~\ref{tab:query_effort}).
We use two complementary environments: (i)~a high-volume \emph{synthetic}
generator that isolates delegation-reconstruction failure modes under overlap,
spawning, and retries/backtracking, and (ii)~a small \emph{micro-deployment}
that executes real tool invocations through a gateway to measure emission
overhead and query runtimes. Table~\ref{tab:prod_params} lists the parameter
ranges; they define a stress-test envelope, not a claim of universal production
distributions. 

\begin{table}[!b]
\centering
\small
\begin{tabular}{p{3.8cm} p{5.2cm}}
\toprule
\textbf{Parameter} & \textbf{Setting (artifact runs)} \\
\midrule
Users ($U$) & $10{,}000$ \\
Agents ($A$) & $100{,}000$  \\
Tools ($K$) & 3--7  \\
Resources & $10^6$ files; $10^3$ repos; $10^2$ channels \\
Delegations/day ($D_{\mathrm{day}}$) & $[2{\times}10^5$--$2{\times}10^6]$ \\
Events/delegation ($E_d$) & $[20$--$500]$ (workflow-dependent) \\
Overlap rate & $[10\%$--$40\%]$ users with $\ge2$  delegations \\
Spawn fanout & mean $[2$--$5]$ children; depth $[1$--$4]$ \\
Retry/backtrack rate & mean retries/step $[0$--$10]$ (W3) \\
Coverage (synthetic / micro) & $0.8$ / $0.95$ \\
Bypass probability (synth / micro) & $0.05$ / $0.01$ \\
Trace coverage (by tool) & $60$--$80\%$ of events carry trace/span ids \\
Loops per workflow (W1/W2/W3) & $50 / 60 / 75$ (event scaling) \\
Runs per workflow (synth / micro) & $3000$ / $50$ (main runs) \\
\bottomrule
\end{tabular}
\caption{Environment used to instantiate the synthetic workflows and micro-deployment in the artifact.}
\label{tab:prod_params}
\end{table}

\begin{table*}[t]
\centering
\small

\begin{minipage}[t]{0.4\textwidth}
\centering
\vspace{0pt}
\begin{tabular}{|l|ccc|}
\toprule
\multicolumn{4}{|c|}{\textbf{Ambiguity $\mathrm{Amb}$ (median, p95)}} \\
\midrule
\textbf{Workflow} & \textbf{B0} & \textbf{B1} & \textbf{B2} \\
\midrule
W1 & $1\ (2)$ & $169\ (314)$ & $1543\ (1543)$ \\
W2 & $1\ (1)$ & $10856\ (12379)$ & $107718\ (107718)$ \\
W3 & $1\ (2)$ & $130\ (236)$ & $1107\ (1107)$ \\
\bottomrule
\end{tabular}
\end{minipage}\hfill
\begin{minipage}[t]{0.3\textwidth}
\centering
\vspace{0pt}
\begin{tabular}{|l|ccc|}
\toprule
\multicolumn{4}{|c|}{\textbf{Delegation recall $\mathrm{Rec}(d)$ (median, p05)}} \\
\midrule
\textbf{Workflow} & \textbf{B0} & \textbf{B1} & \textbf{B2} \\
\midrule
W1 & $0.06\ (0.03)$ & $1\ (0.79)$ & $1\ (1)$ \\
W2 & $1\ (0.50)$ & $1\ (1)$ & $1\ (1)$ \\
W3 & $0.04\ (0.03)$ & $1\ (0.68)$ & $1\ (1)$ \\
\bottomrule
\end{tabular}
\end{minipage}\hfill
\begin{minipage}[t]{0.25\textwidth}
\centering
\vspace{0pt}
\begin{tabular}{|l|cc|}
\toprule
\multicolumn{3}{|c|}{\textbf{Divergence: $\mathrm{Comp}(d)$ stats}} \\
\midrule
\textbf{Workflow} & \textbf{Median} & \textbf{p95} \\
\midrule
W1 & $35$ & $43$ \\
W2 & $1$ & $1$ \\
W3 & $55$ & $64$ \\
\bottomrule
\end{tabular}
\end{minipage}
\caption{Results summary (W1--W3) on synthetic: ambiguity under overlap/interleaving (lower is better; CIM achieves $\mathrm{Amb}=1$ for covered events), delegation recall (higher is better), and prevalence of authority--causality divergence (Comp).}
\label{tab:results_summary}
\end{table*}


\noindent\textbf{Synthetic environment.}
Synthetic runs instantiate $U=10{,}000$ users and $A=100{,}000$ agents with a
shared resource universe ($10^6$ files; $10^3$ repos; $10^2$ channels) and
$K\in[3,7]$ tools sampled from a tool catalog.
 Gateway coverage and bypass are realized by routing
each tool invocation through the gateway with probability given by the coverage
range, and emitting bypass events out-of-band with the bypass probability. 

\noindent\textbf{Micro-deployment.}
The micro-deployment executes workflows W1--W3 (\S\ref{sec:eval_recon}) in a
local agent runtime and emits CIM events at a gateway boundary for each tool
invocation. Micro experiments use a LangGraph execution engine and produce
approximately 70k events under the default configuration (50~runs/workflow).
We use this environment to estimate order-of-magnitude gateway emission
overhead and query runtime differences; results are reported as p50/p95 latency
and median query runtime and should not be interpreted as production-scale
performance claims.

\noindent\textbf{Trace coverage and baseline windows.}
To reflect heterogeneous observability, trace/span identifiers are present for
only a subset of events (tool-dependent; 60--80\% by default). We evaluate
correlation baselines under $\Delta\in\{5\text{s},30\text{s},120\text{s}\}$
and use the same $\Delta$ values in query benchmarks. When trace context is
missing, trace-based grouping falls back to the coarsest available execution
context (e.g., session/task id when present).


We instantiate both environments across the parameter ranges in
Table~4, covering steady-state through bursty delegated
workloads. The parameter ranges are chosen to expose specific failure regimes:
overlap rates above ${\sim}20\%$ cause B1/B2 to produce systematically
ambiguous groupings; spawn fanout above~2 fragments delegation lineage across
disjoint traces; retry rates above~3 per step cause single delegations to span
multiple causally disconnected components. CIM's reconstruction is invariant
across these regimes by construction for covered events; the experiments
quantify how severely baselines degrade. Coverage and bypass ranges model
partial gateway adoption and out-of-band actions.
\subsection{Delegation Reconstruction}\label{sec:eval_recon}

\subsubsection{Workflows and telemetry.}
We evaluate on a workflow suite designed to stress the conditions under which
correlation-based reconstruction fails. The goal is not to verify that explicit
delegation membership yields unambiguous reconstruction---this follows from
CIM's binding invariants---but to quantify how severely baselines degrade
under realistic concurrency and fragmentation regimes.

\smallskip
\noindent\textbf{W1 (cross-system, causally disconnected subtasks)} issues
parallel reads across multiple tools and produces a final synthesis action,
inducing causally disconnected components within a single delegation.
\textbf{W2 (multi-agent composition)} spawns cooperating agents with
re-delegation edges, stressing lineage closure (Invariant~I3) and runtime
fragmentation.
\textbf{W3 (trial-and-adapt)} includes retries and backtracking with
interleaved successes and failures, reflecting LLM agents that iteratively
refine tool calls based on intermediate observations.

\smallskip
In all conditions, observed telemetry includes event stream $E$ with
event-local attributes (tool, action, resource, outcome, timestamps) and,
when available, causal structure $R_{\mathrm{causal}}$. Trace coverage varies
by tool (60--80\% of events carry a trace identifier).

\subsubsection{Methods compared.}
We compare CIM to three baselines without delegation-aware identifiers,
operating on the same observed telemetry $(E, R_{\mathrm{causal}},
\mathrm{attr})$:
\textbf{B0 (trace-only):} group by trace id; fall back to session/task id or
singletons.
\textbf{B1 (time-window):} sliding window $\Delta\in\{5\text{s},
30\text{s}, 2\text{min}\}$; best-performing $\Delta$ reported per workflow.
\textbf{B2 (best-effort hybrid):} trace/session grouping merged by $\Delta$
and \{user, agent instance, tool\}; the strongest baseline, approximating
current SIEM-style practice.

\subsubsection{Ground truth.}
For each run, ground-truth delegation membership and lineage are recorded at
the gateway boundary, yielding $R_{\mathrm{auth}}^{\star}\subseteq D\times E$
with descendant closure. This information is precisely what is
non-identifiable from $(E, R_{\mathrm{causal}}, \mathrm{attr})$ alone.
.

\subsubsection{Reconstruction metrics.}

\noindent\textbf{Ambiguity.} $\mathrm{Amb}(E)=|\mathcal{P}(E)|$: the number
of distinct delegation groupings consistent with observed telemetry.
$\mathrm{Amb}(E)=1$ is unambiguous; CIM achieves this by construction for
covered events. The experiment measures how severely baselines exceed~1 as
overlap and fragmentation increase.

\smallskip
\noindent\textbf{Delegation recall.}
$\mathrm{Rec}(d) = |E^{\star}(d)\cap\widehat{E}(d)|\,/\,|E^{\star}(d)|$,
reported as median and p05 across delegations. The lower tail (p05) captures
the severity of missed or mis-attributed events under overlap.

\smallskip
\noindent\textbf{Authority--causality divergence.}
$\mathrm{Comp}(d)$ counts the causal components spanned by delegation $d$
in $G_{\mathrm{exec}}[E^{\star}(d)]$. This is \emph{method-independent},
depending only on $R_{\mathrm{causal}}$ and $R_{\mathrm{auth}}^{\star}$.
High $\mathrm{Comp}(d)$ indicates regimes where any trace-, session-, or
window-based reconstruction unit is structurally mis-specified; the fraction
with $\mathrm{Comp}(d)>1$ measures how prevalent this mis-specification is.

\subsection{Results}

\noindent\textbf{R1: Ambiguity under overlap and interleaving.}
Under B0/B1/B2, ambiguity increases with concurrency because multiple
partitions remain consistent with identical observed causal structure and
event-local attributes---the baselines are not merely noisy but
underdetermined by the available evidence.
B0 achieves low median ambiguity on W1 and W3 when trace coverage
happens to align with delegation boundaries, but collapses under W2
(multi-agent fanout) where $\mathrm{Amb}$ reaches $10{,}856$ at
median---reflecting the structural mismatch between trace units and
delegation units under re-delegation.
Under CIM, $\mathrm{Amb}(E)=1$ for covered events because delegation
membership is bound at execution time rather than inferred post hoc
(Table~\ref{tab:results_summary}, left).

\smallskip
\noindent\textbf{R2: Delegation recall for cross-system reconstruction.}
Baselines fail in both directions: they under-group causally disconnected
sub-tasks and over-group when overlapping delegations share tools within
the same window.
B0 drops to $0.03$ at p05 on W1 and W3, reflecting near-complete recall
failure when trace boundaries do not align with delegation boundaries.
B2, the strongest baseline, achieves median recall of $1$ on W1--W3 but
its p05 on W1 falls to $0.79$, confirming that the lower tail remains
unreliable under overlap and fragmentation.
The high median recall for W2 across baselines reflects coincidental
trace-delegation alignment in the synthetic setup, not a principled
guarantee---as the divergence analysis in R3 confirms.
CIM recall is $1$ by construction for covered events
(Table~\ref{tab:results_summary}, center).

\smallskip
\noindent\textbf{R3: Prevalence of authority--causality divergence.}
W1 and W3 exhibit high $\mathrm{Comp}(d)$ values (median $35$ and $55$,
p95 reaching $43$ and $64$), confirming that parallel subtasks and
retry-heavy workflows routinely span many causally disconnected components
under a single delegation.
W2 shows $\mathrm{Comp}(d)=1$ throughout because multi-agent composition
in this configuration emits a single connected trace per delegation,
explaining the higher baseline recall in R2 for W2; this alignment is
structural coincidence, not a general property of trace-based
reconstruction.
(Table~\ref{tab:results_summary}, right).

\smallskip
\noindent\textbf{R4: Query construction burden.}
Baseline queries require 6--14 operations per query class; CIM reduces
each to 1--3 direct delegation-scoped predicates (Table~7).
Query runtime at micro-deployment scale was comparable between CIM and
baseline; the primary advantage is therefore reduction in \emph{query
construction and maintenance burden} rather than runtime throughput.

\begin{table}[t]
\centering
\small
\begin{tabular}{lrrc}
\toprule
\textbf{Metric} & \textbf{Baseline} & \textbf{CIM} & \textbf{$\Delta$} \\
\midrule
Bytes/event            & $341$     & $588$     & $+247$      \\
Gateway latency (p50)  & $0.00$~ms & $0.01$~ms & $+0.01$~ms  \\
Gateway latency (p95)  & $0.00$~ms & $0.29$~ms & $+0.29$~ms  \\
Query runtime (median) & \multicolumn{2}{c}{comparable} & ---  \\
\bottomrule
\end{tabular}
\caption{Operational overhead at micro-deployment scale. CIM adds
  $247$~bytes/event and sub-$0.3$~ms gateway latency at p95.  }
\label{tab:overhead_results}
\end{table}

\begin{table}[t]
\centering
\small
\begin{tabular}{p{3.8cm}cc}
\toprule
\textbf{Query class} & \textbf{Baseline ops} & \textbf{CIM ops} \\
\midrule
Cross-system reconstruction      & $6$--$9$  & $1$--$2$ \\
Authority propagation / lineage  & $5$--$8$  & $1$--$2$ \\
Drift / envelope expansion       & $7$--$11$ & $2$      \\
Exfiltration-style sequences     & $8$--$14$ & $2$--$3$ \\
\bottomrule
\end{tabular}
\caption{Query construction effort: number of correlation and normalization
  operations required per query class. }
\label{tab:effort_results}
\end{table}

\subsection{Operational Overhead}\label{sec:eval_overhead}

\noindent\textbf{Measurement setup.}
We measure overhead at two points: (i)~at the gateway boundary when emitting
CIM events, and (ii)~downstream when executing delegation-scoped queries.
Bytes/event is computed by projecting each event into a \emph{baseline} field
set (event-local attributes and correlation hints) versus a \emph{CIM} field
set (baseline plus delegation context, lineage, and normalized predicates),
serialized to compact JSON. Gateway latency is the additional processing time
per tool invocation introduced by delegation binding and event emission
(p50/p95). Query runtimes are measured on a workstation micro-benchmark with
a fixed storage/query configuration and repeated runs with warm-up.

\smallskip
\noindent\textbf{Queries benchmarked.}
We benchmark a fixed suite of six representative delegation-scoped questions:
(i)~delegation closure filtering, (ii)~cross-tool sequence detection
(e.g., multiple reads followed by a write within a delegation), and
correlation baselines approximating these views via
(iii)~trace-only grouping, (iv)~window correlation ($\Delta$ sweep), and
(v)~hybrid stitching over trace/session/task identifiers with a window
fallback.

\smallskip
\noindent\textbf{Results.}
CIM adds $247$~bytes/event and sub-$0.3$~ms gateway latency at p95;
these figures reflect micro-deployment scale and are expected to increase
under production-scale event volumes (Table~\ref{tab:overhead_results}).
Query runtime was comparable between CIM and baseline at this scale;
the measurable advantage is in query \emph{construction} burden:
baseline queries require 6--14 operations per query class (cross-tool
joins, windowing, per-tool normalization, identity unification), whereas
CIM reduces each class to 1--3 predicates over stable action and resource
fields (Table~\ref{tab:effort_results}).
As a concrete example, finding the agent accessing the most sensitive files
under a delegation requires a single filter on \texttt{delegation\_id},
grouped by \texttt{agent\_id} and ranked by
\texttt{sensitivity\_label}---no cross-tool joins required.
For representative threat-detection queries see
Appendix~\ref{app:threat_queries} and Table~9.

\section{Conclusion}

Delegated AI agents break a key assumption behind existing audit and
tracing telemetry: the unit of accountability (delegated authority) is
structurally orthogonal to the unit of execution (traces/spans).
Because the authorization relation $R_{\mathrm{auth}}$ is not encoded in
standard observables, delegation-scoped reconstruction is non-identifiable
under correlation heuristics when executions overlap, retry/backtrack, or
compose across tools and agents---a gap that grows more acute as multi-agent
deployments scale. We introduced CIM, a minimal delegation-aware telemetry layer that makes
authority membership and propagation explicit while remaining
transport-agnostic (e.g.,  OpenTelemetry or existing security
schemas). By specifying the semantic invariants required for
delegation-scoped reconstruction and realizing them at a gateway boundary,
CIM makes reconstruction well-posed on covered events and reduces
delegation-scoped forensic queries from multi-step correlation pipelines to
direct predicates over stable fields.

\smallskip
\noindent\textbf{Limitations.}
CIM provides attribution for actions traversing instrumented boundaries;
actions on uninstrumented channels are unobservable, though their absence
is detectable relative to a declared tool inventory.
CIM attributes actions to the delegation under which they executed
regardless of whether they were intended by the delegating principal---intent
inference and policy compliance remain orthogonal concerns.
The overhead figures reported here reflect micro-deployment scale and are
expected to increase in production-volume deployments.

\pagebreak
\bibliographystyle{ACM-Reference-Format}
\bibliography{references}
\pagebreak
\section*{Appendix}
\addcontentsline{toc}{section}{Appendix}


\appendix

\section{Proof of Proposition~\ref{prop:nonident}}
\label{app:nonident}

\begin{proof}
We make explicit the standard-telemetry assumption used throughout \S\ref{sec:nonreconstructable}:
the observed telemetry contains (i) the event set $E$, (ii) a causal structure $R_{\mathrm{causal}} \subseteq E\times E$
over events (e.g., trace/span parent--child links), and (iii) event-local attributes
$\mathrm{attr}:E\to\mathcal{A}$ (actor, action, resource, outcome, timestamps, etc.).
Critically, these observables do \emph{not} include delegation membership.

Let $D$ denote the set of delegations (units of delegated authority), and let
$R_{\mathrm{auth}} \subseteq D \times E$ be the (latent) authorization/membership relation where
$(d,e)\in R_{\mathrm{auth}}$ means event $e$ executed under delegation $d$.
Standard telemetry does not observe $R_{\mathrm{auth}}$ (nor any predicate equivalent to it).

Formally, define an observation function
\[
F(E, R_{\mathrm{causal}}, \mathrm{attr}, D, R_{\mathrm{auth}}) \;=\; (E, R_{\mathrm{causal}}, \mathrm{attr}),
\]
which discards $R_{\mathrm{auth}}$ because it is latent in standard telemetry.
Non-identifiability means that $F$ is not injective in the argument $R_{\mathrm{auth}}$.

Let $E$ be any event set with $|E|\ge 2$, let $D=\{d_1,d_2\}$, and fix any $(R_{\mathrm{causal}}, \mathrm{attr})$
on $E$. Choose two distinct nonempty proper subsets $S\subset E$ and $S'\subset E$ with $S\neq S'$.
Define two authorization relations $R_{\mathrm{auth}},R'_{\mathrm{auth}}\subseteq D\times E$ by:
\[
R_{\mathrm{auth}} \;=\; \{(d_1,e): e\in S\}\ \cup\ \{(d_2,e): e\in E\setminus S\},
\]
\[
R'_{\mathrm{auth}} \;=\; \{(d_1,e): e\in S'\}\ \cup\ \{(d_2,e): e\in E\setminus S'\}.
\]
Since $S\neq S'$, we have $R_{\mathrm{auth}}\neq R'_{\mathrm{auth}}$.

However, because the observables $(E, R_{\mathrm{causal}}, \mathrm{attr})$ do not depend on $R_{\mathrm{auth}}$ by
construction of $F$, we obtain
\[
F(E, R_{\mathrm{causal}}, \mathrm{attr}, D, R_{\mathrm{auth}}) \;=\; (E, R_{\mathrm{causal}}, \mathrm{attr})
\;=\;
F(E, R_{\mathrm{causal}}, \mathrm{attr}, D, R'_{\mathrm{auth}}).
\]
Thus two distinct authorization relations induce identical observed telemetry, so $R_{\mathrm{auth}}$ cannot be uniquely
recovered from $(E, R_{\mathrm{causal}}, \mathrm{attr})$.
\end{proof}

\section{CIM Event Schema (Minimal Specification)}
\label{app:cim-schema}

This appendix provides a minimal, implementable specification of CIM sufficient for reproducing the analyses in §3--§6. The goal is not to standardize a complete enterprise schema, but to define the \emph{core event envelope} and \emph{semantic action taxonomy} required to preserve both authority-flow and control-flow structure. Implementations may extend this schema with tool-specific attributes.


\subsection{Core Event Envelope}
\label{app:cim-envelope}

Each CIM record is a single event describing one tool-relevant action (e.g., read, write, share, invoke).
Table~\ref{tab:cim-fields} lists the required and optional fields in the minimal envelope, along with types.

\paragraph{Required fields and reconstructability.}
The minimal set of required fields for the core claims in this paper is:
\begin{itemize}
  \item \texttt{time}
  \item \texttt{event\_id}
  \item \texttt{delegation\_id}
  \item \texttt{principal.user\_id}
  \item \texttt{agent.agent\_id}
  \item \texttt{tool.name}
  \item \texttt{action.semantic}
\end{itemize}
Tracing fields (\texttt{trace.*}) are optional: CIM preserves authority-flow even when control-flow metadata is absent.
Tier-2 qualifiers (e.g., \texttt{action.tier2.direction}, \texttt{resource.sensitivity\_label}) are optional but enable the normalized cross-tool analyses in \S4.

\begin{table*}[t]
\centering
\caption{Minimal CIM event fields. Fields marked \emph{Req} are required for delegation-aware reconstruction; others are optional but recommended when available. Tier-2 qualifiers (e.g., direction, sensitivity) align with the semantic action model used in \S4.}
\label{tab:cim-fields}
\begin{tabular}{p{3.7cm} p{1.2cm} p{2.5cm} p{9cm}}
\toprule
\textbf{Field} & \textbf{Req} & \textbf{Type} & \textbf{Semantics} \\
\midrule
\texttt{time} & Req & RFC3339 string & Event timestamp at capture point (gateway or instrumented boundary). \\
\texttt{event\_id} & Req & string & Unique identifier for this CIM event (UUID or equivalent). \\

\texttt{delegation\_id} & Req & string & Durable authorization context identifier; defines the delegation partition over events (\S2). \\
\texttt{delegation\_parent\_id} & Opt & string/null & Parent delegation identifier if authority is explicitly re-delegated (optional chaining). \\

\texttt{principal.user\_id} & Req & string & Human principal that originated the delegated authority (stable ID). \\
\texttt{agent.agent\_id} & Req & string & Logical agent identity (e.g., ``research-agent''); stable across runs. \\
\texttt{agent.instance\_id} & Opt & string & Concrete agent process/session instance (restart-sensitive). \\

\texttt{tool.name} & Req & string & Tool family name (e.g., \texttt{gdocs}, \texttt{github}, \texttt{drive}, \texttt{slack}). \\
\texttt{tool.operation} & Opt & string & Tool-native operation name (e.g., API endpoint, RPC method). \\

\texttt{action.semantic} & Req & enum string & Tier-1 normalized semantic action (\S\ref{app:action-taxonomy}), e.g., \texttt{read}, \texttt{write}, \texttt{share}, \texttt{invoke}. \\
\texttt{action.outcome} & Opt & enum string & \{\texttt{success}, \texttt{failure}, \texttt{unknown}\}. \\

\texttt{action.tier2.direction} & Opt & enum string &
Tier-2 qualifier for boundary crossing: \{\texttt{internal}, \texttt{external}, \texttt{unknown}\}. Used in \S4 queries (e.g., externalization). \\

\texttt{action.tier2.principal \_scope} & Opt & enum string &
Tier-2 qualifier for the target principal category, e.g., \{\texttt{self}, \texttt{org}, \texttt{external}, \texttt{unknown}\}. Used for ``shared with external''-style analyses in \S4. \\

\texttt{resource.type} & Opt & enum string &
Resource category (e.g., \texttt{document}, \texttt{repo}, \texttt{channel}, \texttt{ticket}). (If you use \texttt{resource\_class} in \S4, treat this field as the same concept.) \\
\texttt{resource.id} & Opt & string & Tool-scoped identifier (e.g., doc ID, repo path). \\
\texttt{resource.uri} & Opt & string & Canonical URI/locator if available (may be redacted). \\

\texttt{resource.sensitivity \_label} & Opt & enum string &
Tier-2 qualifier for sensitivity (e.g., \texttt{public}, \texttt{internal}, \texttt{confidential}, \texttt{restricted}, \texttt{unknown}). Used in \S4 blast-radius style summaries. \\

\texttt{target.principal\_id} & Opt & string & Recipient/target principal for sharing/invites (e.g., external email). \\

\texttt{trace.trace\_id} & Opt & string & Tracing identifier if present; encodes control-flow structure. \\
\texttt{trace.span\_id} & Opt & string & Span identifier if present. \\
\texttt{trace.parent\_span\_id} & Opt & string/null & Parent span identifier if present. \\

\texttt{workflow.workflow\_id} & Opt & string & Optional higher-level workflow correlation ID (may group multiple delegations). \\
\texttt{labels} & Opt & map[string]string & Free-form labels for deployment-specific tagging. \\
\bottomrule
\end{tabular}
\end{table*}

\subsection{Semantic Action Normalization Taxonomy}
\label{app:action-taxonomy}

CIM defines a  cross-tool action vocabulary to enable consistent querying across heterogeneous systems. The taxonomy is intentionally compact and may be extended. We group actions by intent:

\paragraph{Read and discovery.}
\begin{itemize}
    \item \texttt{read}: retrieve content or metadata (e.g., file read, document get).
    \item \texttt{search}: query or list resources (e.g., search documents, list repos).
    \item \texttt{enumerate}: enumerate principals or permissions (e.g., list members, list ACLs).
\end{itemize}

\paragraph{Write and modification.}
\begin{itemize}
    \item \texttt{write}: create or modify a resource (e.g., edit document, commit file).
    \item \texttt{delete}: remove a resource.
    \item \texttt{move}: change resource location/namespace (e.g., move file/folder).
\end{itemize}

\paragraph{Sharing and authorization-relevant actions.}
\begin{itemize}
    \item \texttt{share}: grant access or distribute a resource (e.g., add collaborator, create share link).
    \item \texttt{revoke}: remove access (e.g., remove collaborator, disable link).
    \item \texttt{invite}: invite a principal to a workspace/channel/project.
\end{itemize}

\paragraph{Execution and delegation.}
\begin{itemize}
    \item \texttt{invoke}: invoke a tool operation that triggers execution (e.g., run job, open PR workflow).
    \item \texttt{delegate}: create or bind a delegation context (optional explicit event).
    \item \texttt{spawn}: create a child agent instance (optional explicit event).
\end{itemize}

\paragraph{Mapping guidance.}
Normalization is performed using tool-native metadata (operation/method name, endpoint class, and resource type), not payload content. For example, operations named \texttt{get}, \texttt{download}, \texttt{readFile} map to \texttt{read}; operations \texttt{update}, \texttt{commit}, \texttt{edit} map to \texttt{write}; operations \texttt{addMember}, \texttt{setPermission}, \texttt{createLink} map to \texttt{share}. When an operation is ambiguous, implementations may emit both \texttt{tool.operation} and \texttt{action.semantic} and record \texttt{action.outcome}=\texttt{unknown} until disambiguated.

\pagebreak

\subsection{Worked Example: End-to-End Workflow and CIM Events}

To make CIM concrete, we provide a minimal end-to-end workflow and example CIM events emitted by the gateway. The workflow is representative of common enterprise agentic tasks and is intentionally content-minimal: events capture actors, actions, resources, and structure, but not prompts or payload bodies.

A user delegates an agent to: \emph{``Summarize Q4 sales performance and create a memo.''}
The agent performs four steps spanning heterogeneous systems:
(i) reads an internal spreadsheet, (ii) queries a CRM, (iii) writes a memo document, and (iv) shares the memo to a restricted channel. All steps occur under a single delegated authorization context.
Listing~\ref{lst:cim-workflow-events} shows four example CIM events for this workflow. Note that events can span multiple trace components; they remain bound by \texttt{delegation\_id}. For simplicity, we show a single agent instance (\texttt{agent\_id}, \texttt{instance\_id}) executing all steps under one delegation; in practice, multi-agent workflows may emit the same \texttt{delegation\_id} across multiple agent instances (or descendant delegations), while traces may fragment across tool adapters and runtimes.

\begin{lstlisting}[language={},caption={Example CIM events for a single delegated workflow (four steps across tools).},label={lst:cim-workflow-events}]
[
  {
    "time": "2026-02-13T10:00:01.120Z",
    "event_id": "e1",
    "delegation_id": "del_7b3a9c1e",
    "principal": { "user_id": "user_12345" },
    "agent": { "agent_id": "analysis-agent", "instance_id": "agentinst_a9" },
    "tool": { "name": "drive", "operation": "files.get" },
    "action": { "semantic": "read", "outcome": "success" },
    "resource": { "type": "spreadsheet", "id": "file_sales_q4", "uri": "drive://file_sales_q4" },
    "trace": { "trace_id": "T1", "span_id": "S1", "parent_span_id": null }
  },
  {
    "time": "2026-02-13T10:00:03.410Z",
    "event_id": "e2",
    "delegation_id": "del_7b3a9c1e",
    "principal": { "user_id": "user_12345" },
    "agent": { "agent_id": "analysis-agent", "instance_id": "agentinst_a9" },
    "tool": { "name": "crm", "operation": "accounts.search" },
    "action": { "semantic": "search", "outcome": "success" },
    "resource": { "type": "account", "id": "acct_query_q4", "uri": "crm://accounts?q=q4" },
    "trace": { "trace_id": "T2", "span_id": "S7", "parent_span_id": null }
  },
  {
    "time": "2026-02-13T10:00:07.905Z",
    "event_id": "e3",
    "delegation_id": "del_7b3a9c1e",
    "principal": { "user_id": "user_12345" },
    "agent": { "agent_id": "analysis-agent", "instance_id": "agentinst_a9" },
    "tool": { "name": "gdocs", "operation": "documents.create" },
    "action": { "semantic": "write", "outcome": "success" },
    "resource": { "type": "document", "id": "doc_q4_memo", "uri": "gdocs://doc_q4_memo" },
    "trace": { "trace_id": "T3", "span_id": "S2", "parent_span_id": "S1" }
  },
  {
    "time": "2026-02-13T10:00:10.222Z",
    "event_id": "e4",
    "delegation_id": "del_7b3a9c1e",
    "principal": { "user_id": "user_12345" },
    "agent": { "agent_id": "analysis-agent", "instance_id": "agentinst_a9" },
    "tool": { "name": "slack", "operation": "chat.postMessage" },
    "action": { "semantic": "share", "outcome": "success" },
    "resource": { "type": "message", "id": "msg_8891", "uri": "slack://channel/finance-q4/msg_8891" },
    "target": { "principal_id": "slack://channel/finance-q4" },
    "trace": { "trace_id": "T4", "span_id": "S9", "parent_span_id": null }
  }
]
\end{lstlisting}

\section{Gateway Architecture and Deployment Contract}
\label{app:gateway}

\noindent\textbf{Motivation and role.}
Relying solely on instrumentation inside agent runtimes is brittle in heterogeneous deployments and weak under partial trust, where agent implementations vary, may be misconfigured, or may be adversarially modified. We therefore introduce a \emph{gateway} as a uniform observability boundary between agents and tools. By mediating tool invocations, the gateway (i) binds each observed action to a stable \texttt{delegation\_id}, (ii) emits delegation lineage edges at delegation boundaries, and (iii) applies deterministic cross-tool normalization into CIM events. The gateway is observational rather than preventative: it captures structural metadata required for delegation-aware reconstruction and does not require prompt, payload, or model-internal reasoning inspection.

\subsection{Design Principles}
\label{app:gateway:principles}
\noindent\textbf{Progressive enhancement.}
Basic audit capture (timestamps, tool endpoint, operation, outcome) must succeed even when higher-level enrichment fails, ensuring graceful degradation rather than silent data loss.

\noindent\textbf{Determinism and semantic stability.}
Normalization from tool-specific operations to CIM semantics is rule-based and deterministic to support longitudinal comparability and reproducible analyses.

\noindent\textbf{Defense in depth.}
The architecture assumes agent code may be untrusted. Gateways may be isolated from agent runtimes, enforce mandatory routing within a declared scope, and apply tamper-evident protections (e.g., event signing) to emitted telemetry to strengthen audit integrity.

\subsection{Gateway Contract: Four Logical Functions}
\label{app:gateway:contract}
Each gateway instance implements four logical functions, independent of any particular agent framework:

\noindent\textbf{(1) Interception.}
Capture agent-initiated requests to external systems (e.g., proxy/sidecar, SDK wrapper, or in-service hooks) and extract protocol-level metadata such as target system, operation, stable resource identifiers when available, and outcome.

\noindent\textbf{(2) Context binding.}
Attach delegation and actor context to each intercepted action by minting or validating \texttt{delegation\_id} and, when applicable, linking to \texttt{delegation\_parent\_id} to form a delegation lineage.

\noindent\textbf{(3) Semantic normalization.}
Deterministically map tool-specific operations into CIM-normalized action types and resource classes (e.g., \texttt{read}, \texttt{write}, \texttt{share}, \texttt{invoke}; \texttt{doc}, \texttt{repo}, \texttt{ticket}), without content inspection or learned inference.

\noindent\textbf{(4) Event emission.}
Emit a CIM-compliant record containing the event envelope (timestamp, tool endpoint), actor and delegation identifiers, resource attribution, outcome, and (when propagated) workflow correlation fields such as trace/span identifiers.

\subsection{Field Provenance and Integrity Semantics}
\label{app:gateway:provenance}
To make assumptions explicit, CIM fields are partitioned into two provenance classes.

\noindent\textbf{Gateway-bound fields.}
These include authenticated principal identity, observed agent instance identity, timestamp, tool endpoint, and delegation binding (\texttt{delegation\_id} minted/validated at the boundary plus lineage linkage). These fields are authoritative within the declared scope because they are derived from gateway observation and authenticated channels.

\noindent\textbf{Agent-provided fields.}
These are treated as annotations (e.g., task labels, optional correlation IDs when originating from the runtime). When agent-provided context conflicts with gateway-bound facts (e.g., principal binding, tool endpoint), the gateway-bound value is authoritative and the inconsistency may be recorded as an integrity signal for downstream analysis.

\subsection{Layered Gateway Design}
\label{app:gateway:layers}
We realize the contract via four composable layers:

\noindent\textbf{Protocol parsing layer.}
Understands framework- and protocol-specific formats (e.g., MCP, REST, SDKs) and extracts tool names, operations, request identifiers, and outcomes without access to prompts or model state.

\noindent\textbf{Semantic enrichment layer.}
Applies deterministic rules to map raw operations to normalized CIM semantics, including classification of action types, identification of resource classes and identifiers, and categorization of tools.

\noindent\textbf{CIM generation layer.}
Transforms enriched metadata into CIM-compliant events, assigning stable identifiers for agent identity and delegation context, and emitting lineage edges and workflow correlation fields when available.

\noindent\textbf{Log emission layer.}
Exports CIM events to downstream sinks (e.g., log analytics platforms, message queues, object storage) with buffering, retry, and backpressure handling, decoupling capture from storage/analysis.

\subsection{Deployment Patterns and Coverage Model}
\label{app:gateway:deployment}
The gateway supports multiple configurations, which may be composed within a deployment:

\noindent\textbf{Sidecar proxy.}
A sidecar intercepts agent traffic at the network boundary. When coupled with network policy, it can enforce mandatory routing and provide strong bypass resistance.

\noindent\textbf{SDK wrapper.}
Audited SDKs intercept tool invocations within the agent runtime, providing low-friction integration and low latency but typically offering best-effort coverage due to coupling with agent implementations.

\noindent\textbf{In-service instrumentation.}
Services under organizational control emit CIM events directly as part of request handling. This provides high semantic fidelity and is often combined with sidecar proxies for broader coverage.

\noindent\textbf{Mandatory vs. best-effort routing.}
Deployments may enforce mandatory routing within a declared scope, making absence of gateway telemetry meaningful within that scope, or run in best-effort mode where CIM enables reconstruction over covered tools but completeness claims require explicit coverage accounting.

\noindent\textbf{Coverage accounting.}
In partial deployments where only some tools are mediated, CIM supports delegation-scoped analyses over the covered subset (e.g., blast radius, lineage-based attribution, drift), while uncovered channels must be treated as unknown. Global accountability statements therefore require either mandatory routing for the declared scope or explicit mechanisms to detect missing coverage.

\section{Threat Model, Scope, and Bypass Detection}
\label{app:gateway:threat}

\noindent\textbf{Scope and positioning.}
CIM is an observability substrate, not an enforcement or prevention 
mechanism. The gateway raises the ceiling on what can be observed 
and reconstructed post hoc; it does not block misuse at execution 
time. This framing is intentional: CIM's value is structural 
accountability---making composite behaviors visible and attributable 
after the fact---rather than real-time policy enforcement, which 
is the domain of access-control and authorization systems . As with any audit-based system, 
evasion remains possible, but evasion leaves a detectable absence 
rather than a false positive.

\noindent\textbf{Attacker model.}
We consider a motivated insider: a principal who has legitimate 
access to the agentic system and delegated authority, but who 
attempts to use that authority in ways that evade delegation-aware 
observability. This covers both (a)~\textit{malicious insiders} 
who deliberately structure delegations to avoid detection, and 
(b)~\textit{negligent insiders} whose broadly scoped delegations 
produce harmful composite behaviors without deliberate evasion. 
We do not model external adversaries who have compromised the 
gateway itself; tamper-evident storage and gateway hardening 
(orthogonal to CIM semantics) address that surface.

\noindent\textbf{Evasion class 1: Out-of-band channels.}
An insider may act through tools or services that are not 
instrumented by the gateway---for example, directly calling an 
API outside the declared tool inventory, using a personal device, 
or routing through a shadow integration. CIM provides no direct 
signal for actions outside its coverage boundary.

\textit{CIM's residual utility.} CIM's explicit absence semantics 
make coverage gaps detectable: if a tool appears in a declared 
inventory but emits no CIM events during a delegation window, 
the absence is itself a signal. In mandatory-routing deployments, 
any uninstrumented egress violates the routing policy and can 
trigger alerts independently of CIM. In best-effort deployments, 
investigators can identify which tools were covered and which were 
not, bounding the unobservable surface explicitly rather than 
assuming complete coverage. The net effect is that out-of-band 
evasion is possible but not silent: it requires the insider to 
actively avoid instrumented channels, which is itself a 
detectable behavioral pattern when compared against baseline 
tool usage.

\noindent\textbf{Evasion class 2: Delegation fragmentation.}
A sophisticated insider may issue many narrow, short-lived 
delegations rather than one broad one, distributing actions 
across delegation contexts to prevent any single lineage-scoped 
query from revealing the composite behavior. For example, 
instead of one delegation $d$ that reads ten sensitive documents 
and shares them externally, the insider issues ten delegations 
$d_1, \ldots, d_{10}$, each reading one document, and a separate 
delegation $d_{11}$ for the share action.

\textit{CIM's residual utility.} Lineage closure (R3) still 
attributes all child delegations to their originating principal 
via \texttt{delegation\_parent\_id}, so per-principal aggregation 
remains intact even when per-delegation blast radius appears 
narrow. CIM's normalized action vocabulary (R4) further supports 
cross-delegation queries: an analyst can query all \texttt{share} 
actions with \texttt{direction=external} attributed to a principal 
$p$ within a time window, regardless of how many delegations 
were used. Fragmentation therefore shifts the evasion burden 
from hiding individual actions to hiding the principal relationship, 
which CIM's principal binding (R1, I5) is specifically designed 
to preserve. Fragmentation also leaves a structural signal: 
an unusually high rate of short-lived delegations from a single 
principal is itself an anomaly detectable from CIM telemetry 
without payload inspection.

\noindent\textbf{Evasion class 3: Identifier spoofing.}
If \texttt{delegation\_id} values are accepted as 
agent-asserted free-form strings rather than gateway-minted 
and principal-bound tokens, a compromised or malicious agent 
component could inject false delegation identifiers---attributing 
its actions to a benign delegation or to another principal's 
context. 

\textit{CIM's residual utility.} Invariant I5 (binding integrity, 
\S\ref{sec:cim}) directly addresses this: \texttt{delegation\_id} 
minting and validation occur at the gateway boundary, binding 
each identifier to an authenticated principal, an issuance 
timestamp, and a session context. Agent-provided annotations 
are treated as untrusted hints and are not used as authoritative 
delegation bindings, field provenance). 
Consistency check (i) from Appendix~D further 
detects spoofing attempts: if a single \texttt{delegation\_id} 
is observed binding to multiple distinct \texttt{principal.user\_id} 
values, this is flagged as an integrity violation. Spoofing 
is therefore defeated within the instrumented channel as long 
as gateway-minted bindings are used; it remains a risk only 
if the gateway itself is compromised, which is addressed by 
tamper-evident storage and gateway isolation.

\noindent\textbf{Residual utility under partial evasion.}
A key property of CIM is that it degrades gracefully: partial 
evasion narrows the observable surface rather than collapsing 
observability entirely. Formally, let $T \subseteq \text{Tools}$ 
be the set of instrumented tools. CIM guarantees 
delegation-scoped reconstruction for all actions traversing 
$T$; actions on $\text{Tools} \setminus T$ are unobservable 
but their absence is detectable relative to the declared 
inventory. An insider who evades CIM on $k$ tools still 
leaves a fully reconstructable record for all remaining 
tool interactions, and the evasion itself---the gap between 
expected and observed tool coverage---is a structural signal. 
This is strictly better than baseline telemetry, under which 
the same partial record cannot be reliably attributed to a 
delegation context even for the covered tools.

\noindent\textbf{Assumptions within declared scope.}
Security-grade conclusions rely on (i) an authenticated channel 
that binds principals and agent instances at the gateway boundary, 
(ii) either mandatory routing or a mechanism to detect missing 
coverage, and (iii) gateway-minted delegation identifiers with 
principal binding (I5). Tamper-evident storage and gateway 
isolation strengthen evidentiary guarantees against compromised 
infrastructure but are orthogonal to CIM semantics and are 
expected to be provided by the deployment environment.

\begin{table*}[t]
\centering
\small
\setlength{\tabcolsep}{5pt}
\renewcommand{\arraystretch}{1.22}
\begin{tabular}{p{0.25\linewidth} p{0.36\linewidth} p{0.36\linewidth}}
\hline
\textbf{Threat} &
\textbf{CIM query (delegation-scoped)} &
\textbf{Without CIM (baseline correlation)} \\
\hline

\textbf{Sensitive cross-tool aggregation:}
one delegated task touches multiple high-sensitivity resources across tools (e.g., HR + finance + CRM). &
Filter $\texttt{delegation\_id}=d$ (or $d \cup \mathrm{Desc}(d)$); group by \texttt{delegation\_id}; compute $\mathrm{count\_distinct}(\texttt{resource\_class})$ and $\max(\texttt{sensitivity\_label})$; alert if $\max(\texttt{sensitivity\_label}) \ge \texttt{High}$ and $\mathrm{count\_distinct}(\texttt{resource\_class}) \ge k$. &
Join tool logs (Docs/Drive/CRM/etc.); map tool-specific IDs/actions/resources to a common schema; join sensitivity labels per tool (lookups); unify identities (directory join); correlate by (principal, agent instance) + time window $\Delta t$; dedupe retries/backtracking; split/merge concurrent tasks; alert on cross-tool co-occurrence (often ambiguous under overlap). \\
\hline

\textbf{High-sensitivity read $\rightarrow$ externalization:}
read \texttt{High/Restricted} then share/export/post/upload to an external principal/destination under the same delegated task. &
Within $d$ (or descendants), alert if
$\exists(\texttt{action\_tier1}=\texttt{read} \wedge \texttt{sensitivity\_label}\ge \texttt{High})$
and
$\exists(\texttt{action\_tier1}\in\{\texttt{share, export, post, upload}\} \wedge
\texttt{action.tier2.principal\_scope}=\texttt{external})$
(optionally also $\texttt{action.tier2.direction}=\texttt{egress}$). &
Detect sensitive reads per tool (tool-specific fields + label lookups); detect externalization per tool (share/export/post/upload with external targets); unify identities across systems; approximate ``same task'' via time-window correlation $\Delta t$ (and traces when available); dedupe retries; disambiguate overlapping tasks; join the two predicates within the inferred session/window (FP/FN common). \\
\hline

\textbf{Cross-tool failure burst $\rightarrow$ success (credential/permission abuse):}
many denied/failed actions across tools followed by success under the same task. &
Filter to $\texttt{delegation\_id}=d$; count failures where $\texttt{result}\in\{\texttt{denied, failed}\}$ and $\texttt{action\_tier1}=\texttt{invoke}$ with $\texttt{resource\_class}\in\{\texttt{identity},\texttt{auth}\}$; alert if $\texttt{fail\_count}\ge N$ and $\exists(\texttt{result}=\texttt{success})$ within the same delegation (optionally per target resource). &
Collect failure/deny events from heterogeneous tools; normalize \emph{result semantics} (fail/deny/success) across systems; unify identities; correlate attempts into campaigns via time windows $\Delta t$ (or partial traces); dedupe retries; split concurrent tasks; then alert on burst$\rightarrow$success pattern (noisy when tasks overlap or traces fragment). \\
\hline

\textbf{Multi-agent descendant abuse:}
primary delegation spawns sub-agents; a descendant performs restricted access/externalization to evade ``same trace'' rules. &
Apply lineage closure: alert if
$\exists(e:\ \texttt{delegation\_id}\in \mathrm{Desc}(d)\ \wedge\
\texttt{sensitivity\_label}\ge\texttt{High}\ \wedge\
\texttt{action\_tier1}\in\{\texttt{read, share, export, upload}\})$. &
Reconstruct multi-agent relationships from traces/messages (often partial); infer parent/child linkage; join across tool logs; use time-window correlation to approximate a ``delegation tree''; dedupe retries; apply heuristics to attribute descendant actions to the originating task (high ambiguity when delegation boundaries are not explicit). \\
\hline
\end{tabular}

\vspace{3pt}
\caption{Worked threat-detection examples showing how CIM replaces heuristic cross-tool joins and time-window correlation with delegation-scoped predicates. Thresholds (e.g., $k,N,\Delta t$) are deployment-configurable; shown forms are representative of the benchmarked query shapes.}
\label{tab:threat_query_examples}
\end{table*}

\section{Threat-Detection Query Examples (CIM vs.\ Baselines)}
\label{app:threat_queries}

Table~\ref{tab:threat_query_examples} highlights the query-effort gap.
The key point is operational rather than micro-performance: in many real deployments, the dominant cost
is not millisecond-level query latency but the effort required to \emph{author, validate, and maintain}
queries that reconstruct delegation-scoped behavior from heterogeneous event sources.
When delegation membership is not directly observable, even simple questions typically expand into
pipelines of joins, per-source normalization/lookups, identity reconciliation, time-window correlation,
and heuristic post-processing for retries and overlap. These steps are expensive to build correctly,
brittle under concurrency, and costly to maintain as tool schemas evolve.
CIM makes delegation membership explicit at execution time for covered events, converting
correlation-heavy pipelines into delegation-scoped predicates and thereby reducing query construction
and maintenance burden.

\paragraph{How to read the table.}
Each row is a representative delegation-scoped question shown in two forms.
The \textbf{CIM query} column expresses the question as a predicate over normalized fields (e.g.,
\texttt{delegation\_id}, \texttt{action\_tier1}, \texttt{resource\_class}, \texttt{sensitivity\_label},
and externalization attributes such as \texttt{action.tier2.principal\_scope} and
\texttt{action.tier2.direction}). These queries typically require filtering on a delegation (and
optionally its descendant closure), followed by a small aggregation and predicate.
The \textbf{Without CIM} column summarizes a typical baseline \emph{operator plan}---not tied to any
specific query language---that must be implemented when delegation membership is inferred indirectly.

\paragraph{What ``operations'' mean (link to Table~7).}
The operation counts in Table~7 are intended as a proxy for query construction and maintenance effort:
they correspond to major steps in the baseline plans, including (i) cross-source joins,
(ii) schema mappings and per-source normalization/lookups (e.g., metadata labels and resource
attributes), (iii) identity-unification joins, (iv) windowed correlation ($\Delta t$) to approximate
task boundaries, and (v) heuristic post-processing to handle retries/backtracking and overlap
(split/merge decisions). Beyond runtime, each operation is additional logic that must be written,
tested, tuned, and kept consistent as sources change. CIM reduces or eliminates these reconstruction
steps by providing a stable delegation-scoped join key and lineage closure.

\paragraph{Thresholds and deployment policy.}
Thresholds in Table~\ref{tab:threat_query_examples} (e.g.,
$\mathrm{count\_distinct}(\cdot)\ge k$, $\texttt{fail\_count}\ge N$, and correlation window $\Delta t$)
are deployment-configurable and are shown only to illustrate query \emph{shape}.

\paragraph{Coverage and limitations.}
These examples assume events are captured through the CIM emission path (e.g., a gateway/middleware).
Out-of-band actions that bypass this path are not directly observable.
CIM's contribution is that, for \emph{covered} events, delegation-scoped reconstruction is well-posed
and does not require heuristic task inference, reducing correlation complexity---and thus query
construction burden---in downstream analyses.

\end{document}